\begin{document}

\title{Quantum algorithms for computational geometry problems\thanks{Supported by the project "Quantum algorithms: from complexity theory to experiment" funded under ERDF programme 1.1.1.5.}}
\author[1]{Andris Ambainis}
\author[1]{Nikita Larka} 
\affil[1]{Center for Quantum Computer Science, Faculty of Computing, University of Latvia}
\date{}
\maketitle

\begin{abstract}
We study quantum algorithms for problems in computational geometry, such as \textsc{Point-On-3-Lines} problem. In this problem, we are given a set of lines and we are asked to find a point that lies on at least $3$ of these lines. \textsc{Point-On-3-Lines} and many other computational geometry problems are known to be \textsc{3Sum-Hard}. That is, solving them classically requires time $\Omega(n^{2-o(1)})$, unless there is faster algorithm for the well known \textsc{3Sum} problem (in which we are given a set $S$ of $n$ integers and have to determine if there are $a, b, c \in S$ such that $a + b + c = 0$).  \\
Quantumly, \textsc{3Sum} can be solved in time $O(n \log n)$ using Grover's quantum search algorithm. This leads to a question: can we solve \textsc{Point-On-3-Lines} and other \textsc{3Sum-Hard} problems in $O(n^c)$ time quantumly, for $c<2$? \\
We answer this question affirmatively, by constructing a quantum algorithm that solves \textsc{Point-On-3-Lines} in time $O(n^{1 + o(1)})$. The algorithm combines recursive use of amplitude amplification with geometrical ideas. We show that the same ideas give $O(n^{1 + o(1)})$ time algorithm for many \textsc{3Sum-Hard}  geometrical problems.
\end{abstract}

%%%%%%%%%%%%%%%%%%%%%%%%%%%%%%%%%%%%%%%%%%%%%%%%%%%%%%%%%%%%%%%%%%%%%%%%%%%%%%%%%%%%%%%%%%%%%%

\section{Introduction}

The \textsc{3Sum} problem is as follows: given a set of numbers $S$, do there exist $a, b, c \in S$ such that $a + b + c = 0$? There is a nearly trivial classical algorithm that solves this problem in time $O(n^2)$. More advanced algorithms give only a logarithmic improvement to this quadratic complexity \cite{DBLP:journals/corr/JorgensenP14}. It is conjectured that no classical algorithm can solve \textsc{3Sum} problem  in $O(n^{2-\epsilon})$ time. \\
Many problems in computational geometry (for example, determining whether a given set of points contains $3$ points that lie on a line) also seem to require $\Omega(n^2)$ time classically. 
Gajentaan and Overmars \cite{DBLP:journals/comgeo/GajentaanO95} showed that the \textsc{3Sum} problem can be embedded 
into them. This implies that they cannot be solved in $O(n^{2-\epsilon})$ time, unless the \textsc{3Sum} problem can also be solved in 
$O(n^{2-\epsilon})$ time. Such problems are called \textsc{3Sum-Hard}. Besides 3 points on a line, examples of \textsc{3Sum-Hard} problems include determining whether a given set of points contain $3$ points that lie on a line, determining whether a given set of triangles covers given polygon, and determining whether a given set of axis-parallel segments are separable with a line into two nonempty subsets \cite{DBLP:journals/comgeo/GajentaanO95}.  \\
Quantum computing allows designing quantum algorithms that outperform classical algorithms. One such example is Grover search \cite{Grover:1996:FQM:237814.237866}  which achieves quadratic speedup over classical exhaustive search and can be used as a subroutine to speedup more complicated problems \cite{Ambainis:2019:QSE:3310435.3310542, Magniez:2005:QAT:1070432.1070591}. 

In particular, the \textsc{3Sum} problem can be solved by a quantum algorithm in $O(n\log n)$ time, by a fairly simple application of Grover search procedure. Indeed, we can do an exhaustive search over pairs $a, b \in S$ and look for $-(a + b) \in S$ using some data structure (for example, we can use a balanced search tree). However, a direct application of Grover search does not give a quadratic speedup for many geometrical \textsc{3Sum-Hard} class problems. For example, if we need to determine whether a set of points contain three points that lie on the same line, we need to search for all possible triplets of points, which results in $O(n^\frac{3}{2})$ time quantum algorithm \cite{DBLP:journals/qic/Furrow08}.\\
In this paper we combine quantum effects with more sophisticated geometric techniques to design a quantum algorithm with complexity $O(n^{1 + o(1)})$ for \textsc{Point-On-3-Lines} problem. We use ideas from this algorithm to solve many other \textsc{3Sum-Hard}  problems in time $O(n^{1+o(1)})$.

{\bf Related work.} The he \textsc{3Sum} problem has been studied in the context of query complexity and it can be solved with $O(n^{3/4})$ queries, as it is a special case of the subset finding problem of Childs and  Eisenberg \cite{DBLP:journals/qic/ChildsE05} in which one has to find constant-size subset $S$ of an $n$-element set, with $S$ satisfying a certain property. A matching 
$\Omega(n^{3/4})$ quantum query lower bound is known \cite{DBLP:conf/innovations/BelovsS13}. However, the subset finding algorithm of \cite{DBLP:journals/qic/ChildsE05} does not have a time-efficient implementation in the general case. Some special cases
(for example, the element distinctness and $k$-distinctness algorithms of \cite{DBLP:journals/siamcomp/Ambainis07}) can be implemented efficiently but no efficient implementation is known for the 3-SUM case. 

We think that it is unlikely that this line of work would lead to an $o(n)$ time quantum algorithm for the \textsc{3Sum} problem. The element distinctness algorithm \cite{DBLP:journals/siamcomp/Ambainis07} and the subset finding algorithm  \cite{DBLP:journals/qic/ChildsE05} are special cases of a quadratic speedup for hitting times of Markov chains \cite{Szegedy04, abs-1903-07493, abs-1912-04233}. It is unlikely that these methods will lead to a quantum algorithm that is more than quadratically faster than the best classical algorithm.

More generally, we conjecture that the \text{3Sum} problem cannot be solved in $O(n^{1-\epsilon})$ 
quantum time in the QRAM model, neither with methods based on subset finding nor any oher approach.
This could serve as a basis for a quantum version of fine-grained complexity, similarly to recent quantum fine grained 
lower bounds of \cite{abs-1911-01973, abs-1911-05686} based on quantum versions of
Strong Exponential Time Hypothesis (SETH).

%%%%%%%%%%%%%%%%%%%%%%%%%%%%%%%%%%%%%%%%%%%%%%%%%%%%%%%%%%%%%%%%%%%%%%%%%%%%%%%%%%%%%%%%%%%%%%

\section{Preliminaries}

\subsection{Problems}
Here we define problems we focus on in this paper. All of them belong to \textsc{3Sum-Hard} class. \cite{DBLP:journals/comgeo/GajentaanO95} 

\begin{itemize}
\item \textsc{Point-On-3-Lines}: Given a set of lines in the plane, is there a point that lies on at least three of them? \cite{DBLP:journals/comgeo/GajentaanO95} 
\item \textsc{3-Points-On-Line}: Given a a set of points in the plane, is there a line that contains at least three points? \cite{DBLP:journals/comgeo/GajentaanO95}
\item \textsc{Strips-Cover-Box}: Given a set of strips in the plane (strip is defined as an infinite area between two parallel lines (see Figure \ref{strip})), does their union contain a given axis-parallel rectangle? \cite{DBLP:journals/comgeo/GajentaanO95}
\item \textsc{Triangles-Cover-Triangle}: Given a set of triangles in the plane, does their union contain another given triangle? \cite{DBLP:journals/comgeo/GajentaanO95}
\item \textsc{Point-Covering}: Given a set of $n$ half-planes and a number $t$, determine whether there is a point that is covered by at least $t$ half-planes. \cite{DBLP:journals/comgeo/GajentaanO95}
\item \textsc{Segment-Separator}: Given a set of vertical line segments, does there exists a non vertical line that does not intersect any of given segments and contaisn at least one given segment in each of two half-planes? \cite{DBLP:journals/comgeo/GajentaanO95}
\item \textsc{Visibility-Between-Segments}: Given a set of $n$ vertical line segments $S$ and two particular line segments $s_{1}$ and $s_{2}$, determine whether there is a point on $s_{1}$ and a point on $s_{2}$, such that segment between these two points doesn't intersect any segment from $S$. \cite{DBLP:journals/comgeo/GajentaanO95}
\end{itemize}

We also define \textsc{General-Covering} problem. We will design quantum algorithm for this problem with $O(n^{1+o(1)})$ complexity and then reduce many \textsc{3Sum-Hard} problems to this problem.   

\begin{itemize}
\item \textsc{General-Covering}: We are given a set of $n$ strips and angles (angle is defined as an infinite area between two non-parallel lines (see Figure \ref{angle})) in the plane. The task is to find a point $X$ that satisfies the following conditions:
\begin{itemize}
\item
the point $X$ is an intersection of two angle or strip boundary lines $\ell_{1}, \ell_{2}$ ($\ell_{1}$ and $\ell_{2}$ may be boundary lines of two different angles/strips);
\item 
the point $X$  does not belong to the interior of any angle or strip;
\item
the point $X$ satisfies a given predicate $P(X)$ that can be computed in $O(1)$ time.
\end{itemize}
\end{itemize}

\begin{figure}[!htb]
\minipage{0.5\textwidth}
\centering
\includegraphics[scale=0.33]{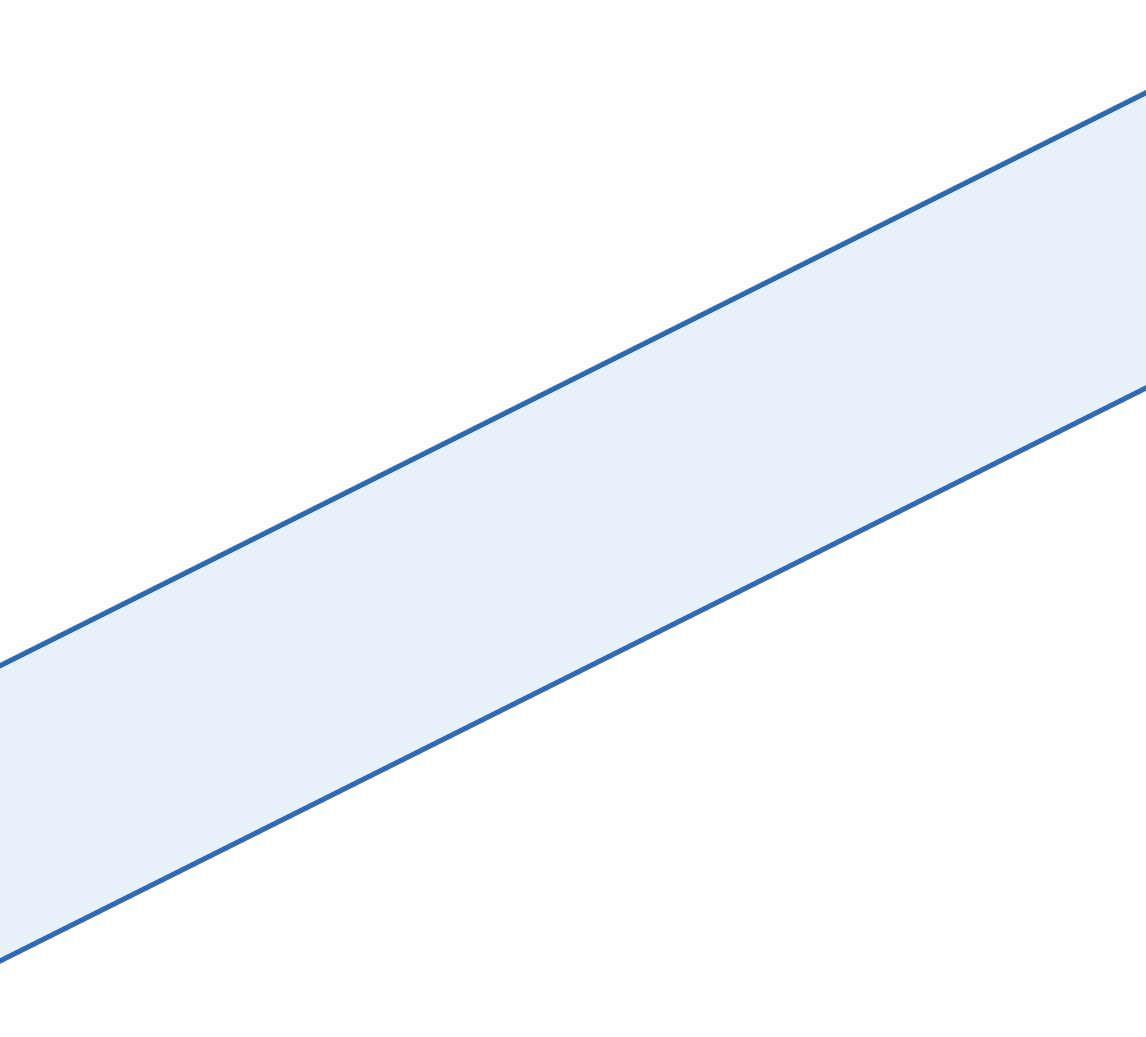}
\caption{Strip}
\label{strip}
\endminipage\hfill
\minipage{0.5\textwidth}
\centering
\includegraphics[scale=0.33]{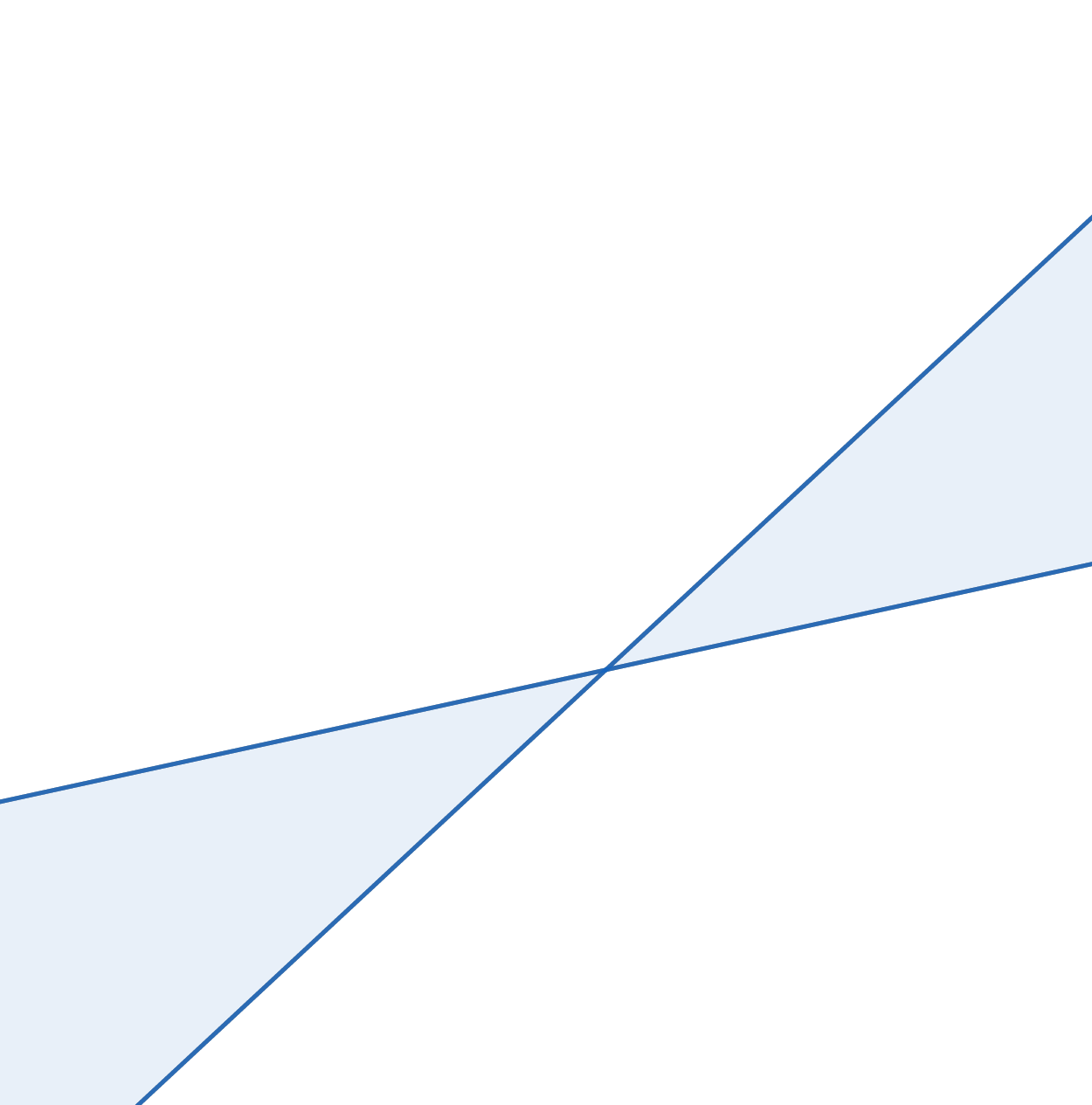}
\caption{Angle}
\label{angle}
\endminipage\hfill
\end{figure}

\subsection{Model}

We assume a query model in which the query returns the description $D_i$ 
of the $i^{\rm th}$ object (point, line, strip, triangle, etc.), given $i$. 
The description consists of several numbers that specify the $i^{\rm th}$ object (e.g. coordinates of a point or values of coefficients
in the equation that specifies a line).  In the quantum case, we can query superpositions of indices $i$. 
The input to quantum query $Q$ consists of two registers, with one register holding $i$ and 
the other register provides the space for $D_i$. The query transformation acts as $Q\ket{i, x}=\ket{i, x\oplus D_i}$.
In particular, given a superposition $\ket{\psi} = \sum_i \alpha_i \ket{i, 0}$ in which $x=0$, 
applying $Q$ gives the state $Q\ket{\psi} = \sum_i \alpha_i \ket{i, D_i}$.
Applying $Q$ to $\ket{\phi} = \sum_i \alpha_i \ket{i, D_i}$ gives the state
$Q\ket{\phi} = \sum_i \alpha_i \ket{i, 0}$ in which the descriptions $D_i$ are erased from the second register.

Our algorithms work in the commonly used QRAM (quantum random access memory) model of computation \cite{QRAM} which assumes quantum memory can be accessed in a superposition. QRAM has the property that any time-$T$ classical algorithm that uses random access memory can be invoked as a subroutine for a quantum algorithm in time $O(T)$. We can thus use primitives for quantum search (e.g., Grover’s quantum search or Amplitude amplification) with conditions checking which requires data stored in a random access memory.

\subsection{Tools}
We will use two well known quantum procedures in our algorithm. 

\begin{theorem}
(Grover search) \cite{Grover_Framework} Given a set of $n$ elements $X = \{x_{1}, x_{2},...,x_{n}\}$ and a boolean function $f : X \rightarrow \{0, 1\}$. The task is to find $x \in X$ such that $f(x) = 1$. There is a bounded-error quantum procedure that solves this problem using $O(\sqrt{n})$ quantum queries.
\end{theorem}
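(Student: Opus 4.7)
The plan is to prove Grover's theorem by exhibiting an explicit iteration operator that, starting from a uniform superposition, rotates the state into the subspace spanned by marked elements. First I would introduce the uniform superposition $\ket{\psi_0} = \frac{1}{\sqrt{n}} \sum_{i=1}^{n} \ket{x_i}$, observe that it decomposes as $\ket{\psi_0} = \sin\theta \ket{M} + \cos\theta \ket{U}$, where $\ket{M}$ is the normalized uniform superposition over marked elements (those with $f(x)=1$), $\ket{U}$ is the analogous superposition over unmarked elements, and $\sin^2\theta = k/n$ with $k$ being the number of marked elements.

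Next I would define the Grover operator $G = -H^{\otimes \log n} O_0 H^{\otimes \log n} O_f$, where $O_f$ is the phase oracle obtained from the query oracle $Q$ by one extra controlled-NOT and an ancilla initialized to $\frac{1}{\sqrt 2}(\ket{0}-\ket{1})$, and $O_0$ is the reflection about $\ket{0}$. A short calculation shows that $O_f$ reflects about $\ket{U}$ and $-H^{\otimes \log n} O_0 H^{\otimes \log n}$ reflects about $\ket{\psi_0}$; the product of two reflections in a 2D plane is a rotation by twice the angle between their axes, so $G$ rotates the state by angle $2\theta$ inside the plane $\mathrm{span}(\ket{M},\ket{U})$. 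Hence after $t$ iterations the state is $\sin((2t+1)\theta)\ket{M} + \cos((2t+1)\theta)\ket{U}$, and measuring after $t \approx \frac{\pi}{4\theta}$ iterations yields a marked element with probability at least $1 - O(\sin^2\theta)$, which is bounded away from $0$.

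Since each iteration uses $O(1)$ queries and $\theta \geq \sqrt{k/n}$, the total query count is $O(\sqrt{n/k})$, which is $O(\sqrt n)$ in the worst case $k=1$. The main obstacle is that $k$ is not known in advance; running the algorithm with the wrong number of iterations can overshoot and return an unmarked element with high probability. I would resolve this by the standard BBHT-style outer loop: repeatedly guess $k \in \{1, 2, 4, \ldots, n\}$ (or pick the number of iterations uniformly at random from $\{0, 1, \ldots, \lceil \sqrt{n/k}\rceil\}$), run Grover with that parameter, and classically verify the output with one additional evaluation of $f$. Summing the geometric series of iteration counts still gives total query complexity $O(\sqrt n)$, and bounded error is achieved by $O(1)$ repetitions of the whole procedure.
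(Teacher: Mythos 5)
The paper does not prove this statement at all: it is quoted as a known tool with a citation, so there is no ``paper proof'' to compare against. Your argument is the standard and correct one --- the two-dimensional rotation analysis of the Grover iterate built from the phase oracle and the reflection about the uniform superposition, combined with the BBHT-style handling of the unknown number of marked elements and a final classical verification of the output. The details check out ($\sin^2\theta = k/n$, rotation by $2\theta$ per iteration, $O(\sqrt{n/k})$ iterations, geometric summation over guessed values of $k$), and you correctly identify and resolve the one genuine subtlety, namely that the number of solutions is not known in advance. This is a complete proof of the cited theorem.
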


\begin{theorem}
(Amplitude amplification) \cite{Amplitude_Amplification} Let $A$ be a quantum procedure with one-sided error and success probability at least $\epsilon$. Then, there is a quantum procedure $B$ that solves the same problem with success probability $\frac{2}{3}$ invoking $A$ $O(\frac{1}{\sqrt{\epsilon}})$ times.
\end{theorem}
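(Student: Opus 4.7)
The plan is to generalise Grover's algorithm from search over an unstructured database to the amplification of an arbitrary state-preparation procedure. I would first decompose $A\ket{0,0} = \sqrt{p}\,\ket{\psi_G} + \sqrt{1-p}\,\ket{\psi_B}$, where $\ket{\psi_G}$ is the normalised projection of $A\ket{0,0}$ onto the ``good'' subspace (those basis states that $A$ flags as a correct answer), and $p \geq \epsilon$ by hypothesis. Because $A$ has one-sided error, the predicate identifying a good output is computable reversibly, so I get for free a unitary $S_G$ that flips the sign of exactly the good basis states.

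Next I define the amplification step $Q = -A\, S_0\, A^{-1}\, S_G$, where $S_0$ flips the sign of $\ket{0,0}$ and leaves every other basis state untouched. In the two-dimensional subspace $\mathrm{span}(\ket{\psi_G}, \ket{\psi_B})$ the operator $S_G$ acts as the reflection across $\ket{\psi_B}$, and $-A S_0 A^{-1}$ acts as the reflection across $A\ket{0,0}$; since the composition of two reflections is a rotation, $Q$ rotates by angle $2\theta$, where $\sin\theta = \sqrt{p}$. Hence after $k$ iterations starting from $A\ket{0,0}$ the state makes angle $(2k+1)\theta$ with $\ket{\psi_B}$, and the key step is to choose $k$ so that $(2k+1)\theta$ lies as close to $\pi/2$ as possible. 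Taking $k = \lfloor \pi/(4\theta) \rfloor$ keeps the angle within $\theta$ of $\pi/2$, so measuring gives a good basis state with probability at least $\cos^2\theta$, which is a constant bounded away from zero; a constant number of independent repetitions then boosts the success probability to $\tfrac{2}{3}$. The total cost is $2k+1 = O(1/\theta) = O(1/\sqrt{p}) = O(1/\sqrt{\epsilon})$ calls to $A$ and $A^{-1}$.

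The main obstacle is that the actual value of $p$, and hence the optimal $k$, is typically unknown, so the rotation could overshoot $\pi/2$ and destroy the amplification. I would handle this with the standard doubling trick: for $j = 0, 1, 2, \ldots$ draw a uniformly random integer $r \in \{0, 1, \ldots, 2^j\}$, run $r$ iterations of $Q$ on $A\ket{0,0}$, measure, and verify. Once $2^j$ exceeds $1/\theta$, averaging over $r$ dephases the rotation angle enough to give a constant probability of landing in the good subspace on each round, and since the costs $2^j$ grow geometrically the total expected number of applications of $A$ remains $O(1/\sqrt{\epsilon})$.
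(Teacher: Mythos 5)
The paper does not prove this theorem; it imports it from the cited reference, and your argument is essentially the standard Brassard--H{\o}yer--Mosca--Tapp proof that the citation points to: the two-reflection rotation operator $Q=-AS_0A^{-1}S_G$, the angle analysis giving $O(1/\theta)=O(1/\sqrt{\epsilon})$ iterations, and the exponential/randomized schedule to cope with unknown $p$. The argument is correct (the only cosmetic point is that the ``at least $\cos^2\theta$'' bound should be paired with the observation that when $\theta>\pi/4$ one takes $k=0$ and succeeds with probability $p>1/2$ already, so the success probability is at least $1/2$ in every case), so there is nothing to add.
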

Note that any constant success probability $1-\epsilon$ can be achieved with repeating Amplitude Amplification constantly many times. \\

We will also use the following well known computational geometry algorithm. % in our algorithm.

\begin{theorem}
(Arrangement of lines) \cite{Lines_Arrangement} 
\label{thm:arrange}
Given a set of $n$ lines in the plane, we can compute partition of the plane formed by those lines in time $O(n^2)$. 
\end{theorem}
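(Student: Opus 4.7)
The plan is to construct the arrangement incrementally, inserting the $n$ lines one at a time while maintaining a doubly-connected edge list (DCEL) that represents the partition built so far. First I would observe that the output itself has combinatorial complexity $\Theta(n^2)$: the arrangement has at most $\binom{n}{2}$ vertices, and by Euler's formula the number of edges and faces is also $O(n^2)$, so merely writing down the DCEL requires $\Omega(n^2)$ work in the worst case and $O(n^2)$ is therefore the natural target.

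For the inductive step, assume we already have the arrangement $A_i$ of $\ell_1,\dots,\ell_i$ and wish to insert $\ell_{i+1}$. The records that must change are exactly the vertices, edges, and faces of $A_i$ that meet $\ell_{i+1}$, i.e.\ the \emph{zone} of $\ell_{i+1}$ in $A_i$. I would locate an unbounded face where $\ell_{i+1}$ enters the arrangement (which is trivial once the lines are sorted by slope, a one-time $O(n\log n)$ preprocessing step), and then walk along $\ell_{i+1}$ through the zone. At each visited face I split the two edges where $\ell_{i+1}$ crosses the face boundary, cut the face into two pieces, and relink the DCEL half-edge pointers, spending only $O(1)$ work per face beyond the local complexity of that face.

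Correctness is immediate from the construction; the running-time claim reduces to bounding the total work performed across all insertions. This is exactly what the Zone Theorem supplies: the combinatorial complexity of the zone of any single line in an arrangement of $i$ lines is $O(i)$. Granting this, the $i$-th insertion costs $O(i)$, and $\sum_{i=1}^n O(i)=O(n^2)$ as desired. The main obstacle is the Zone Theorem itself: a naive edge-charging argument gives only $O(i\log i)$, and driving the bound down to linear requires the sharper amortized induction of Chazelle--Edelsbrunner--Guibas--Sharir, in which one carefully charges the left and right boundary edges of each zone face to the unique line they lie on. Once that bound is in hand, the rest of the proof is bookkeeping. Degeneracies such as three concurrent lines or pairs of parallel lines can be handled either by a symbolic perturbation of coordinates or by an explicit case split in the DCEL update, neither of which affects the asymptotic running time.
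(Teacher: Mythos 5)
The paper gives no proof of this theorem; it is stated as a known result with only a citation, so there is nothing internal to compare against. Your sketch is the standard incremental-construction argument from the cited literature --- insert lines one at a time into a DCEL, bound the cost of each insertion by the complexity of the zone of the new line, and invoke the Zone Theorem to get $O(i)$ per insertion and $O(n^2)$ overall --- and it is correct, with the honest caveat (which you state) that the Zone Theorem itself is used as a black box.
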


%In short, algorithm to compute the partition can be described as follows. The algorithm is incremental, it stores doubly connected %edge list of the partition while incrementally adding new lines from the given set to the partition. \\ 

We will also use point-line dualization for problem reductions. Point-line dualization is a plane transformation that maps points to lines and lines to points in the following way:
\begin{itemize}
\item Line $\ell: y=ax + b$ is mapped to point $\ell^{*}=(a,-b)$
\item Point $P = (a, b)$ is mapped to line $P^{*}: y=ax-b$
\end{itemize}
One may note, that the following properties are true:
\begin{enumerate}
\item $(P^*)^* = P$ and $(\ell^*)^* = \ell$
\item $P \in \ell \iff \ell^* \in P^*$
\item If point $A, B, C$ lie on one non-vertical line, then lines $A^*, B^*, C^*$ meet at one point.
\item If non-vertical lines $p, q, r$ meet at one point, then points $p^*, q^*, r^*$ lie on one line.
\item Points from vertical line segment are mapped to a strip.
\item Points from non-vertical line segment are mapped to an angle.
\end{enumerate}

%%%%%%%%%%%%%%%%%%%%%%%%%%%%%%%%%%%%%%%%%%%%%%%%%%%%%%%%%%%%%%%%%%%%%%%%%%%%%%%%%%%%%%%%%%%%%%

\section{Point on three lines}
In this part we describe a quantum algorithm which solves \textsc{Point-On-3-Lines} problem in $O(n^{1+o(1)})$ time, improving over the $O(n^{3/2})$ time quantum algorithm of Furrow \cite{DBLP:journals/qic/Furrow08}. The idea behind our algorithm is as follows. Suppose we are given a set $S$ of lines in the plane. From this set we randomly pick $k$ lines. Those $k$ lines split plane into no more than $\binom{k + 1}{2} + 1$ regions. Each line from set $S$ intersects $k + 1$ regions. So, on average, a region contains $O(\frac{n}{k})$ lines crossing that region. \\
These facts give an opportunity to design a quantum algorithm that improves over the complexity $O(n^\frac{3}{2})$ of simple quantum search (as in Furrow's algorithm \cite{DBLP:journals/qic/Furrow08}).
 If we have a point $A$, then we can decide if $A$ belongs to some line from $S$ using $O(\sqrt{\frac{n}{k}} + k)$ quantum time. To do so, we need to find a region which contains point $A$ and check if $A$ belongs to a line which crosses that region. For finding the region, $O(k)$ time suffices. Checking if $A$ belongs to a line from $S$ can be done by Grover's search
over $O(\frac{n}{k})$ lines that cross this region. For this, $O(\sqrt{\frac{n}{k}})$ time suffices.\\
If we need to find three lines that intersect in one point, we run Grover search over all pairs of lines $(\ell_{i}, \ell_{j})$. For each pair, we find the intersection point $P_{i,j} = l_{i} \cap l_{j}$ and check if point $P_{i,j}$ belongs to a third line using algorithm described earlier. If the subdivision of the plane into regions can be done in time  $O(nk^2)$, this algorithm runs in time $O(nk^2 + \sqrt{n^2}(k + \sqrt{\frac{n}{k}}))$. Setting $k=n^\frac{1}{5}$ gives $O(n^\frac{7}{5})$. But we can find even better algorithm. After dividing the plane into regions, instead of searching for an intersection point of three lines, we search for a region which has this point (search is done using Grover search) and we recursively apply $O(n^\frac{7}{5})$ algorithm to find the intersection point of three lines inside that region. 
We can then add more levels of recursion to decrease the complexity further. We now describe the final algorithm (in which we recurse at the optimal choice of $k$ and the number of levels of recursion grows with $n$). \\
Let $S$ be the given set of lines in a plane. Let $P$ be a subset of $S$ containing exactly $k$ lines. Lines in $P$ divide plane into convex (possibly infinite) polygons. We arbitrarily triangulate regions, which are bounded by at least $4$ lines. This results in a subdivision of the plane  into regions $R_{1}, R_{2},...,R_{t}$ where each region is bounded by at most $3$ segments (see Figure \ref{separation}). Let $s(R_{i}) = \{\ell \mid \ell \in S \text{ and } \ell \text{ intersect } R_{i}\}$. \\

\begin{figure}[h]
\centering
\includegraphics[scale=0.4]{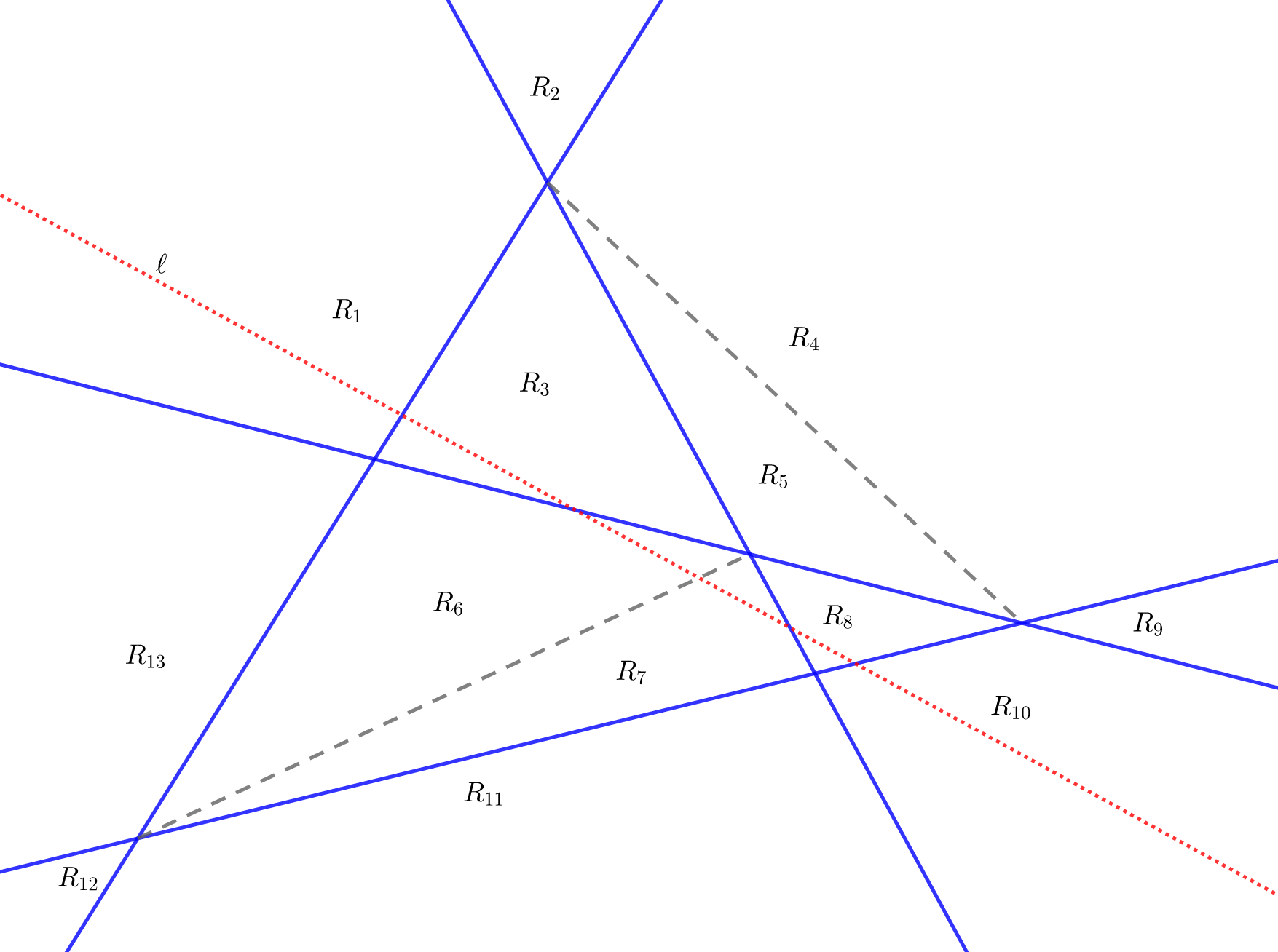}
\caption{Blue lines divide plate into 13 regions: $R_{1}, R_{2}, ...,R_{13}$. Red line $\ell$ passes through regions: $\ell \in s(R_{1}),  s(R_{3}),  s(R_{6}),  s(R_{7}),  s(R_{8}),  s(R_{10})$}
\label{separation}
\end{figure}

We start with the following three observations: 

\begin{lemma}
If after the triangulation we get $t$ regions: $R_{1}, R_{2},...,R_{t}$, then $t \leq 2|P|^2 = 2k^2$
\end{lemma}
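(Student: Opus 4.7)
I would reduce the claim to two classical facts about line arrangements plus a routine polygon-triangulation count.

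First, I would invoke the standard complexity bounds for the arrangement formed by the $k$ lines of $P$: there are at most $\binom{k}{2}$ intersection points (vertices) and at most $k^2$ edges, because each line is cut by the other $k-1$ lines into at most $k$ pieces (two rays and at most $k-2$ finite segments). These bounds hold without any general-position assumption, since coincidences only decrease the counts. In particular, the arrangement has at most $k^2$ edges in total.

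Next, let $s_i$ denote the number of edges on the boundary of the $i$-th face of the arrangement before triangulation. Since each edge of the arrangement separates exactly two faces (treating unbounded faces uniformly), the usual handshake identity gives $\sum_i s_i = 2E \leq 2k^2$. Every face is convex, so triangulating a face with $s_i$ boundary edges produces at most $s_i - 2$ triangles when $s_i \geq 3$, and leaves the face untouched (contributing a single region) when $s_i \leq 3$. Either way, the $i$-th face contributes at most $s_i$ regions to the final subdivision. Summing over all faces yields $t \leq \sum_i s_i \leq 2k^2$, which is the required bound.

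The only point that needs a bit of care is the treatment of unbounded faces, which cannot literally be cut into ordinary triangles by diagonals between existing vertices. However, since the lemma only asks for the crude upper bound $2k^2$, the uniform estimate ``a face with $s_i$ boundary edges yields at most $s_i$ regions after triangulation'' absorbs this issue without any case distinction, so no separate argument for unbounded faces is needed. Everything else is a direct calculation.
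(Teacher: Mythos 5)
Your proof is correct and follows essentially the same route as the paper: both bound the number of arrangement edges by $k^2$ via the "each line is cut into at most $k$ pieces" argument, use the handshake identity to get $\sum_i s_i \le 2k^2$, and observe that triangulating a face with $s_i$ boundary edges yields at most $s_i$ regions. The paper phrases this with face counts $f_i$ indexed by the number of bounding lines, but the computation is identical.
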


\begin{proof}
If $f_{i}$ is the number of faces bounded by $i$ lines before the triangulation, then
\begin{flalign}
& t = f_{1} + f_{2} + f_{3} + 2f_{4} + 3f_{5} + ... + (k-2)f_{k} \leq \sum_{i=1}^{k}if_{i} \leq 2k^2
\end{flalign}
The last inequality holds because $\sum_{i=1}^{k}if_{i}$ is equal to twice the total number of line segments before the triangulation (because each line segment is on the boundary of two faces, one on each side) and the number of line segments is at most $k^2$ (because each of $k$ lines is split by the other $k-1$ lines into at most $k$ segments).
\end{proof}

\begin{lemma}
Sets $s(R_{1}),...,s(R_{t})$ can be built classically in time $O(|S| \times |P|^2)$.
\end{lemma}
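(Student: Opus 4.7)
The plan is to precompute the arrangement of $P$ together with its triangulation, and then, for each line $\ell \in S$, test membership in $s(R_i)$ for every region $R_i$ by a direct geometric check.

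First I would invoke Theorem \ref{thm:arrange} to build the arrangement of the $k$ lines in $P$ in time $O(|P|^2)$, then triangulate each face with more than three bounding segments; standard triangulation of a convex (possibly unbounded) polygon with $i$ edges runs in $O(i)$ time, so the total triangulation cost is $O(|P|^2)$ as well. By the previous lemma this yields at most $t \le 2|P|^2$ regions, each stored together with its (at most three) boundary segments in $O(1)$ space. All of this is well within the $O(|S|\cdot|P|^2)$ budget since $|S| \ge |P|$.

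Next I would initialize $s(R_i) = \emptyset$ for every $i$ and iterate over all pairs $(\ell, R_i)$ with $\ell \in S$ and $i \in \{1, \ldots, t\}$. For each pair I test in $O(1)$ time whether $\ell$ meets $R_i$: since $R_i$ has at most three bounding segments, it suffices to check whether $\ell$ crosses any of them, and, if it does not, whether an arbitrary interior point of $R_i$ happens to lie on $\ell$ (this catches the degenerate case where $\ell$ enters $R_i$ only through a vertex; a generic perturbation of $P$ eliminates it altogether). Whenever the test succeeds I append $\ell$ to $s(R_i)$. The total number of pairs examined is $|S| \cdot t = O(|S|\cdot |P|^2)$, and each test costs $O(1)$ in the QRAM model, giving the claimed running time.

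The only real obstacle is bookkeeping around degeneracies (a query line passing exactly through a vertex of the arrangement, or coinciding with a bounding line of a region), but these are handled either by an $O(1)$ case split or by a symbolic perturbation of the input coordinates, neither of which affects the asymptotic bound. No clever incremental walk through the arrangement is needed, because the budget already allows us to spend $O(|P|^2)$ time per line in $S$.
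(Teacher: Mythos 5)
Your proposal matches the paper's proof: build the arrangement of $P$ via Theorem \ref{thm:arrange} in $O(|P|^2)$ time, triangulate, and then test each of the $|S| \cdot t = O(|S| \times |P|^2)$ line--region pairs in constant time. The extra detail you give on the $O(1)$ intersection test and degeneracy handling is a welcome elaboration but does not change the argument.
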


\begin{proof}
We can construct regions $R_{1}, R_{2},...,R_{t}$ in time $O(|P|^2)$ using Theorem \ref{thm:arrange}. 
We build $s(R_{i})$ by iterating through each line from $S$ and checking whether the line intersects region $R_{i}$. This step takes time $O(|S| \times t) = O(|S| \times |P|^2)$.
\end{proof}

\begin{lemma}
If $P$ is chosen uniformly at random, then
\begin{equation}
\Pr[\max_{i}|s(R_{i})| \, \ge \, 3\frac{|S|}{|P|}(5\log(|S|) + \log(\epsilon^{-1}))] \, \le \, \epsilon
\end{equation} 
\end{lemma}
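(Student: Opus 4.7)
This lemma is the standard Clarkson--Shor / $\varepsilon$-net type concentration bound for line arrangements. Writing $n:=|S|$, $k:=|P|$ and $c:=3\tfrac{n}{k}(5\log n+\log\varepsilon^{-1})$ for the target threshold, I would tackle it in three steps.

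\emph{Reduction.} First, observe that every triangulated region $R_i$ sits inside a single face $F$ of the pre-triangulation arrangement of $P$, so $s(R_i)\subseteq s(F)$; it therefore suffices to bound $\max_F|s(F)|$ where $F$ ranges over the faces of the (un-triangulated) arrangement of $P$.

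\emph{Pointwise decay and union bound.} For any specific convex polygon $F^*$ with bounding lines $\mathcal{L}(F^*)\subseteq S$ and interior-crossing set $s(F^*)\subseteq S$, the polygon $F^*$ is a face of the arrangement of $P$ iff $\mathcal{L}(F^*)\subseteq P$ and $s(F^*)\cap P=\emptyset$. Since $P$ is a uniform random $k$-subset of $S$, the probability that a specific set of $c$ lines is disjoint from $P$ is at most $(1-c/n)^{k}\le e^{-ck/n}$, which for our choice of $c$ is at most $n^{-15}\varepsilon^{3}$. A direct union bound over all potential heavy faces is loose because faces may have up to $k$ bounding edges, so the next move is canonicalisation. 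The cleanest route is to recognise the statement as the $\varepsilon$-net theorem for the range space on ground set $S$ whose ranges are $\{\ell\in S:\ell\text{ crosses }F^*\}$ over convex regions $F^*$ in the plane: this range space has constant VC-dimension, and the Haussler--Welzl bound yields precisely the threshold $c$. Alternatively, one describes each heavy face $F$ by three of its bounding lines $\ell_{1},\ell_{2},\ell_{3}\in P$ together with one of eight sign patterns, giving a triangular envelope $T\supseteq F$ with $|s(T)|\ge|s(F)|\ge c$; there are $O(n^{3})$ such envelopes, which one combines with the factor $(k/n)^{3}$ that the three defining lines land in $P$ and the exponential decay above.

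\textbf{Main obstacle.} The envelope $T$ from the direct argument is typically strictly larger than $F$, so lines of $P$ may cross $T\setminus F$ without invalidating ``$F$ is a face of the arrangement of $P$''. Consequently the exponential-decay factor (which requires $s(T)\cap P=\emptyset$) cannot simply be appended to the $(k/n)^{3}$ probability that $T$'s three defining lines land in $P$; a naive union bound over canonical triangles gives only $O(k^{3})$, which is useless. Bridging this gap is the technical heart of the proof: the standard remedies are either to refine the canonical region (for example, vertical trapezoids determined by four lines, whose conflict set \emph{is} forced to avoid $P$) or to invoke the higher-moment Clarkson--Shor random-sampling lemma, both of which recover exactly the $(n/k)(\log n + \log\varepsilon^{-1})$ scaling asserted by the lemma. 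Unbounded faces and faces bounded by fewer than three lines of $P$ are straightforward low-complexity edge cases handled separately.
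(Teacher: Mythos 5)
There is a genuine gap, and it starts at your very first step. The reduction to the un-triangulated arrangement is false: the quantity $\max_F|s(F)|$ over raw faces of the arrangement of $P$ cannot be bounded by $c=3\frac{|S|}{|P|}(5\log|S|+\log\epsilon^{-1})$ --- it can equal $|S|-|P|$ \emph{deterministically}. Take the lines of $S$ tangent to a circle at distinct points; for any choice of $P$, every line of $S\setminus P$ passes through the interior of the central face of the arrangement of $P$ (the convex polygon circumscribing the circle), so that single face conflicts with all $|S|-|P|$ unsampled lines. The triangulation is therefore not a convenience to be stripped away but the reason the lemma is true: only after triangulating does every region have at most $3$ sides, so that $|s(R_i)|$ is controlled by per-edge crossing counts. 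The same configuration breaks your $\varepsilon$-net claim: the range space $\{\ell\in S:\ell\text{ crosses }F^*\}$ with $F^*$ ranging over \emph{arbitrary} convex regions has unbounded VC dimension (in the dual, the lines crossing a convex body form the region between a convex and a concave graph, and such ranges shatter arbitrarily large sets; concretely, convex hulls of suitably chosen points near the tangent points shatter the tangent lines), so Haussler--Welzl does not apply as you invoke it --- it applies only once $F^*$ is restricted to triangles, i.e., once you keep the triangulation. Finally, you correctly isolate the conflict-list problem for canonical envelopes as the technical heart and then defer it to ``standard remedies'' without executing any; note also that the Clarkson--Shor / canonical-triangle machinery presupposes a canonical, locally defined decomposition, whereas the lemma allows the regions to be triangulated arbitrarily. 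As written, the proposal is a plan whose first move is wrong and whose central step is left open.

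For contrast, the paper's proof is elementary and sidesteps all of this. Fix any line $\ell$, order the intersection points of the lines of $S$ with $\ell$, and call $\ell$ bad if some $L=\lceil\frac{|S|}{|P|}(5\log|S|+\log\epsilon^{-1})\rceil$ consecutive intersection points all come from lines outside $P$; a direct computation gives $\Pr[\ell\text{ is bad}]\le|S|\big(1-\frac{L}{|S|}\big)^{|P|}\le\epsilon/|S|^4$. The union bound is then taken not over faces but over the set $S'$ of lines of $S$ together with all lines through at least two vertices of the arrangement of $S$ (so $|S'|\le|S|^4$); this set contains every line supporting an edge of any triangulated region, whatever triangulation was chosen. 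If no line of $S'$ is bad, each of the at most $3$ edges of each $R_i$ is crossed by fewer than $L$ lines of $S$, giving $|s(R_i)|\le 3(L-1)$. If you want to salvage your route, the minimal fix is to run the net or union-bound argument over triangles rather than convex faces and to account for the arbitrariness of the triangulation --- which is exactly what the paper's $S'$ device accomplishes.
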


\begin{proof}
Let $\ell$ be an arbitrary line (possibly not from the set $S$). Lines from set $S$ intersect line $\ell$ in points $X_{1}, X_{2},...,X_{m}$ in this order (if two lines $\ell_{i}$ and $\ell_{j}$ intersect $\ell$ in the same point, then $X_{i} = X_{j}$). We note that $m \leq |S|$, since some lines from $S$ might be parallel to $\ell$. We color a point $X_{i}$ with white color if the corresponding line $\ell_{i}$ from $S$ is in the set $P$. Otherwise we color the point $X_{i}$ with black color. We define $L = \lc \frac{|S|}{|P|}(5\log(|S|) + \log(\epsilon^{-1})) \rc$ and we assume that $L \leq |S|$, since otherwise lemma is obviously true. 

We say that a line $\ell$ is bad, if there exists index $i$, such that $X_{i+j}$ is black for all $j \in [0...L-1]$.  
The probability of  $\ell$ being bad can be upper bounded as follows

\begin{flalign}
&\Pr\Big[\bigvee_{i=1}^{m-L+1}(X_{i}, X_{i+1},...,X_{i + L - 1} \text{ are all black})\Big] = \ \\
&\leq \sum_{i=1}^{m - L + 1}\Pr[X_{i}, X_{i+1},...,X_{i + L - 1} \text{ are all black}]
= (m - L + 1) \frac{{|S| - L\choose |P|}}{{|S|\choose|P|}}  \\ 
&\leq |S|\bigg(\frac{|S| - L}{|S|}\bigg)^{|P|}
\leq |S|\Bigg[\bigg(1 - \frac{L}{|S|}\bigg)^{\frac{|S|}{L}}\Bigg]^{L\frac{|P|}{|S|}} \leq \frac{|S|}{e^{5\log|S| + \log{\epsilon^{-1}}}} = \frac{\epsilon}{|S|^4}
\end{flalign}

Consider set $S^\prime$ which consists of lines from $S$ and 
lines that pass through at least two intersection points of lines from $S$. 
Then, every edge $e$ of every region $R_i$ lies on a  line that belongs to $S'$
(because $e$ is either a segment of one of original lines from $S$ or is created during the triangulation and,
in the second case, both endpoints of $e$ are intersection points of two lines from $S$)  
Since there are at most ${|S| \choose 2}$ intersection points of lines from $S$, we have
  $|S^\prime| \leq |S|^4$ and

\begin{equation}
\Pr[S^\prime \text{ contains bad line}] \leq \sum_{\ell \in S^\prime}\Pr[\ell \text{ is bad}] \leq |S^\prime| \times \frac{\epsilon}{|S^4|} \leq \epsilon
\end{equation}

To finish the proof, it is enough to see that the fact that $S^\prime$ doesn't contain bad line implies $|s(R_{i})| \leq 3(L-1) < 3\frac{|S|}{|P|}(5\log(|S|) + \log(\epsilon^{-1}))$ for all $i$. Indeed, if no lines in $S^\prime$ is bad, then each side of each region $R_{i}$ contains less than $L$ black points. Since a black point corresponds to a line which intersects a region $R_{i}$ and each region is bounded by at most three segments, the region $R_{i}$ is intersected by no more than $3(L-1)$ lines from $S$. 

\end{proof}

\begin{theorem}
There is a bounded error quantum algorithm for \textsc{Point-On-3-Lines} problem, that runs in time $O(|S|^{1 + o(1)})$.

\begin{proof}
The algorithm has a parameter $k$ and allowable error probability $\epsilon$. The algorithm consists of a recursive procedure that takes a set of lines $X$ as input and returns $3$ lines from the set $X$ which intersects at one point or tells that there are no such $3$ lines. %We are interested in procedures output on the input set $S$. \\

\begin{procedure}
\caption{$Algo_{k}$($X$)}
\If{$|X| < k$}{
Check for an intersection of $3$ lines classically, by exhaustive search
}
$R_{1}, R_{2},...,R_{t} = \text{RandomPlaneSeparation}_{\frac{\epsilon}{2}}(X)$ \\
$\text{Build sets: } s(R_{1}), s(R_{2}),...,s(R_{t})$ \\
\If{$max_{i}|s(R_{i})| > 3\frac{|X|}{k}(5\log(|X|) + \log(\frac{2}{\epsilon}))$}{
return error
}
Let $A$ be the algorithm that randomly chooses $j\in[t]$ and runs $Algo_{k}(s(R_{j}))$. \\
With Amplitude amplification, run $A$ with the success probability amplified to at least $1-\frac{\epsilon}{2}$. \\
%\EndProcedure 

\end{procedure}

The recursive procedure can be described as follows. If the input set $X$ contains less than $k$ lines, we solve \textsc{Point-On-3-Lines} classically in time $O(|X|^2)$. Otherwise, we split the plane into regions $R_{1}, R_{2}...,R_{t}$ with $k$ random lines from $X$ and build sets $s(R_{1}), s(R_{2}),...,s(R_{t})$. If there are three lines that intersect at one point, then this point is located in one of the regions $R_{i}$ (if this point is on the boundary of a region, then this point can be found during the construction of sets $s(R_{i})$). We use amplitude amplification to find the region $R_{i}$ which contains intersection point of three lines and recursively apply the procedure to lines from $s(R_{i})$. \\

Suppose that $S$ contains three lines that intersect at one point. If $|S| < k$, the algorithm finds those $3$ lines with certainty. 
Let $|S| \geq k$. The probability that $s(R_{i}) > 3\frac{|S|}{k}(5\log(|S|) + \log(\frac{2}{\epsilon}))$ for some $R_i$ and the algorithm returns an error is less than $\frac{\epsilon}{2}$. 
By executing amplitude amplification (running $Algo_{k}(s(R_{i}))$  $O(\sqrt{t}) = O(k)$ times), we can reduce probability of not finding the $3$ lines to $\frac{\epsilon}{2}$. 
So, $Algo_k(S)$ finds desired three lines with probability at least $1 - \epsilon$. \\

If $T(|X|)$ is a runtime of $Algo_k(X)$, then:
\begin{equation}
T(|X|) = O(|X| \times k^2) + O\big(\sqrt{k^2}\big) \times T\Big(3\frac{|X|}{k}(5\log(|X|) + \log(\frac{2}{\epsilon}))\Big)
\end{equation} 
If $ k = |S|^\frac{1}{\alpha} \cdot 3(5\log(|S|) + \log(\frac{2}{\epsilon}))$, then:
\begin{equation}
T(|X|) \leq O(|X| \times |S|^\frac{2}{\alpha}\log^2(|S|)) + O\big(\sqrt{k^2}\big) \times T\big(|X| \times |S|^{-\frac{1}{\alpha}}\big)
\end{equation}
There are $(C_{1}k)^{2j}$ problems on recursion level $j$ for some constant $C_{1}$ and each problem size is at most $|S|^{1 - \frac{j}{\alpha}}$.
\begin{equation}
\begin{aligned}
T(|S|) \leq \sum_{j=0}^{\alpha} \sqrt{(C_{1}k)^{2j}} \times \Big[|S|^{1 - \frac{j}{\alpha}} \times |S|^\frac{2}{\alpha}\log^2(|S|)\Big] = \\
= \sum_{j=0}^{\alpha} \Bigg(\frac{C_{1}k}{|S|^{\frac{1}{\alpha}}}\Bigg)^j \big(|S|^{1 + \frac{2}{\alpha}}\log^2(|S|)\big) \leq \\
\leq \alpha \big(C_{2} \log(|S|)\big)^\alpha \big(|S|^{1 + \frac{2}{\alpha}} \log^2(|S|)\big)
\end{aligned}
\end{equation}
If $\alpha = \sqrt{\frac{2\log(|S|)}{\log(C_{2}) + \log\log|S|}}$, then $T(|S|) = O(\alpha |S|^{1 + \frac{4}{\alpha}} \log^2(|S|)) = O(|S|^{1 + o(1)})$ 
 
\end{proof}

\end{theorem}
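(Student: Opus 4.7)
The plan is to design a recursive quantum procedure $Algo_k$ that takes a set $X$ of lines and either outputs three concurrent lines from $X$ or reports none exist, using amplitude amplification across the regions produced by a random sample.

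\textbf{Algorithm.} If $|X| < k$ I would run a classical exhaustive search in $O(k^2)$ time. Otherwise, sample a uniformly random subset $P \subseteq X$ of size $k$, triangulate the arrangement of $P$ into regions $R_1,\dots,R_t$ with $t \le 2k^2$ by the first lemma, and construct the sets $s(R_i)$ in total time $O(|X|\cdot k^2)$ by the second lemma. If any $|s(R_i)|$ exceeds $L := 3\tfrac{|X|}{k}(5\log|X|+\log(2/\epsilon))$, abort; by the third lemma this happens with probability at most $\epsilon/2$. Finally, amplitude-amplify the subprocedure that picks a uniformly random $j\in[t]$ and calls $Algo_k(s(R_j))$ to success probability $1-\epsilon/2$, paying an $O(\sqrt{t})=O(k)$ overhead.

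\textbf{Correctness.} If three lines of $X$ concur at a point $Q$, then either $Q$ lies in the interior of some region $R_i$, in which case all three lines appear in $s(R_i)$, or $Q$ lies on a region boundary, in which case it is detected already during the construction of $s(R_i)$. Hence some recursive call has a valid witness, and amplitude amplification finds it with probability $1-\epsilon/2$. A union bound gives per-level failure at most $\epsilon$; taking $\epsilon$ inversely polynomial in $|S|$ adds only a logarithmic factor to $L$ and keeps the total failure bounded across the $O(\alpha)$ recursion levels.

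\textbf{Complexity.} The resulting recurrence is
\begin{equation}
T(|X|) \;=\; O(|X|\cdot k^2) \;+\; O(k)\cdot T\bigl(O(\tfrac{|X|}{k}\log|S|)\bigr).
\end{equation}
I would choose $k = |S|^{1/\alpha}\cdot\Theta(\log|S|)$ so that the problem size shrinks by a factor of roughly $|S|^{1/\alpha}$ per level and the recursion terminates in $\alpha$ levels. Summing across levels gives
\begin{equation}
T(|S|) \;\le\; \sum_{j=0}^{\alpha} k^{j}\cdot |S|^{1-j/\alpha+2/\alpha}\log^2|S| \;\le\; \alpha\,(C_2\log|S|)^\alpha\cdot |S|^{1+2/\alpha}\log^2|S|.
\end{equation}
The main obstacle is balancing the two competing factors $|S|^{2/\alpha}$ and $(\log|S|)^\alpha$. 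Setting $\alpha = \sqrt{\tfrac{2\log|S|}{\log\log|S|+\log C_2}}$ makes both contributions $|S|^{o(1)}$, yielding $T(|S|)=O(|S|^{1+o(1)})$. The subtle point is ensuring that the logarithmic slack from the third lemma does not compound faster than the per-level shrinkage allows — equivalently, that $(\log|S|)^\alpha = |S|^{o(1)}$ for this choice of $\alpha$, which follows directly from the stated scaling.
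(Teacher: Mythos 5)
Your proposal is correct and takes essentially the same route as the paper's own proof: the same random sample of $k$ lines, triangulation into $O(k^2)$ regions, abort condition based on the high-probability bound $\max_i|s(R_i)| \le 3\tfrac{|X|}{k}(5\log|X|+\log(2/\epsilon))$, amplitude-amplified recursion over regions, and the identical recurrence with $k = |S|^{1/\alpha}\cdot\Theta(\log|S|)$ and $\alpha = \sqrt{2\log|S|/(\log C_2+\log\log|S|)}$. Your added remark about controlling error accumulation across the $\alpha$ recursion levels by taking $\epsilon$ inverse-polynomial is a small refinement the paper glosses over, but it does not change the approach.
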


%%%%%%%%%%%%%%%%%%%%%%%%%%%%%%%%%%%%%%%%%%%%%%%%%%%%%%%%%%%%%%%%%%%%%%%%%%%%%%%%%%%%%%%%%%%%%%

\section{Other 3SUM hard problems}

In this section, we show how it is possible to apply plane separation ideas, described in the previous section, to speed up other \textsc{3-Sum-Hard} problems defined in \cite{DBLP:journals/comgeo/GajentaanO95}.
\begin{itemize}
\item \textsc{3-Points-On-Line} \\ 
This problem is dual to the \textsc{Point-On-3-Lines} problem \cite{DBLP:journals/comgeo/GajentaanO95}, and so is solvable with the same quantum algorithm in time $O(n^{1+o(1)})$, as described in the previous section.
\item \textsc{General-Covering} \\
The given $n$ strips and angles form $2n$ lines. We divide the plane into regions by randomly choosing $k$ out of those $2n$ lines, similarly to the algorithm described in the previous section. A region and a strip/angle can be in one of the following relations: the strip/angle fully covers the region, the strip/angle partly covers the region or the strip/angle has no common points with the region. We can identify all regions that are fully covered by some strip/angle in time $O(nk^2)$. For regions that are not fully covered by some strip/angle, we identify the set $s(R_{i})$ of strips and angles which cross that region. Non covered regions may contain the desired intersection point, but this intersection point is formed by the lines that are boundary lines of the strips/angles in the set $s(R_{i})$. Similarly to the algorithm \textsc{Point-On-3-Lines}, our task is divided into $O(k^2)$ tasks, each of which involved $O(\frac{n\log n}{k})$ objects. 
The time complexity is
\begin{equation}
T(n) = O(nk^2) + O\big(\sqrt{k^2}\big) \times T\Big(3\frac{2n}{k}(5\log(2n) + \log(\frac{2}{\epsilon}))\Big) .
\end{equation}
Similarly to the analysis of \textsc{Point-On-3-Lines} problem, we get $T(n)=O(n^{1 + o(1)})$.
\item \textsc{Strips-Cover-Box} \\
This problem is just the special case of the \textsc{General-Covering} problem with the predicate $P(X)$ being true if the point $X$ is located inside the given box. Then, the point $X$ from \textsc{General-Covering} problem corresponds to an uncovered point in \textsc{Strips-Cover-Box} problem. So, \textsc{Strips-Cover-Box} can also be solved in time $O(n^{1 + o(1)})$.
\item \textsc{Triangles-Cover-Triangle} \\
The given $n$ triangles consist of $3n$ segments. We extend each segment to a line and separate the plane into regions, similarly to the \textsc{Point-On-3-Lines} problem with randomly chosen $k$ lines. A triangle and a region can be in one of the following relations: the triangle fully covers the region, the triangle partly covers the region or the triangle has no common point with the region. We can identify all regions that are fully covered by some triangle in time $O(nk^2)$. All other regions may contain a point $X$ which is not covered by any triangle. Similarly to the \textsc{Point-On-3-Lines} problem, we search for the region $R_{i}$ which contains that point. Note that, if a triangle partly covers the region $R_{i}$, then at least one of the segments that form this triangle is in $s(R_{i})$. So, we can finish our algorithm, just as in \textsc{General-Covering} problem, with the predicate $P(X)$ being true, if the point $X$ is located inside the triangle that must be covered.     
\item \textsc{Point-Covering} \\
The given $n$ half-planes are specified by $n$ lines. We separate the plane into regions, similarly to the \textsc{Point-On-3-Lines} problem, by randomly choosing $k$ out of $n$ given lines. For each region $R_{i}$ we compute the number of half-planes $r_{i}$ that fully cover this region. This takes $O(nk^2)$ time. To determine if there exists a point that is covered by at least $t$ half-planes, we need to tell, if there exists a point inside a region $R_{i}$ that is covered by at least $t - r_{i}$ half-planes from $s(R_{i})$. As in the \textsc{General-Covering} problem, the algorithm takes $O(n^{1 + o(1)})$ time.
\item \textsc{Visibility-Between-Segments} \\
We dualize the given $n$ vertical segments, to get $n$ strips. We need to find a point that does not belong to any of the strips and has the property that the corresponding  line in the initial plane intersects two given segments $s_{1}$ and $s_{2}$. This problem is just the special case of the \textsc{General-Covering} problem, where the predicate $P(X)$ is true if the line corresponding to $X$ intersects two given segments $s_{1}$ and $s_{2}$. Just like in the \textsc{General-Covering} problem, this problem can also be solved in $O(n^{1 + o(1)})$ time.
\item \textsc{Segment-Separator} \\
This problem can be solved in exactly the same way as the \textsc{Visibility-Between-Segments} problem, with the only difference being the predicate $P(X)$. Now this predicate is true, if the line corresponding to a point separates the given segments in the way required for the  \textsc{Segment-Separator} problem. Since $X$ is the intersection point of two lines in the dual plane, the line, corresponding to the point $X$, must go through two endpoints of two different given segments. So, $P(X)$ is false, if the corresponding line goes through an edge of the convex hull of the endpoints of given segments. This can be detemined in time $O(1)$, after we precompute the convex hull in time $O(n\log(n))$. This results in an $O(n^{1 + o(1)})$ time quantum algorithm.
\end{itemize}

%%%%%%%%%%%%%%%%%%%%%%%%%%%%%%%%%%%%%%%%%%%%%%%%%%%%%%%%%%%%%%%%%%%%%%%%%%%%%%%%%%%%%%%%%%%%%%

\bibliographystyle{alpha}
\bibliography{bibliography}

\end{document}